\newtheorem{Theorem}{Theorem}
\newtheorem{Definition}[Theorem]{Definition}
\newtheorem{Lemma}[Theorem]{Lemma}
\title{Quantum Deletion Codes Derived From Quantum Reed-Solomon Codes}
\author{
Manabu HAGIWARA
\thanks{
Department of Mathematics and Informatics,
Graduate School of Science,
Chiba University
1-33 Yayoi-cho, Inage-ku, Chiba City,
Chiba Pref., JAPAN, 263-0022.
E-mail: hagiwara@math.s.chiba-u.ac.jp
}
}
\date{2023/06/22}
\begin{document}

\maketitle

\begin{abstract}
This manuscript presents a construction method 
for quantum codes capable of correcting multiple deletion errors. 
By introducing two new alogorithms, the alternating sandwich mapping and the block error locator,
the proposed method reduces deletion error correction 
to erasure error correction.
Unlike previous quantum deletion error-correcting codes,
our approach enables flexible code rates
and eliminates the requirement of knowing the number of deletions.
\end{abstract}

\section{Introduction}
This manuscript discusses the construction of quantum codes 
capable of correcting quantum multiple deletion errors. 
Quantum error-correcting codes have gained significant attention 
in the fields of quantum computing and communication 
due to their essential role in error resilience 
\cite{divincenzo1996fault,gottesman2010introduction,ekert1996quantum}.
In order to improve error resilience in computation and communication, 
various proposals have been developed 
for constructing quantum error-correcting codes, 
primarily by applying classical bit-flip error-correcting codes 
to unitary error correction 
\cite{calderbank1996good,steane1996multiple,gottesman1997stabilizer}.

Error-correction plays a crucial role in various applications
in both classical and quantum information domains
\cite{chang2001reed,howard2006error,surekha2015payment,shor2000simple}.
Recently, there has been growing interest 
within classical coding theory regarding an error model 
called deletion error 
\cite{kulkarni2013nonasymptotic,wachter2017limits,schoeny2017codes}.
Classical deletion error correction was proposed in the 1960s.
New applications were pointed out 50 years later.
This error model holds potential applications 
in error correction for DNA storage \cite{buschmann2013levenshtein} 
and racetrack memory \cite{chee2018coding,mappouras2019greenflag}. 
Additionally, current quantum communication is 
susceptible to deletion errors caused by photon loss during transmission 
\cite{bergmann2016quantum,bergmann2016quantumNOON}. 
Therefore, the development of quantum deletion error-correcting codes 
may be crucial for both present and future applications 
in quantum communication.

In 2020, 
Nakayama discovered a specific example of a quantum error-correcting code 
capable of correcting quantum single deletion errors \cite{nakayama2020first}.
Since then, several papers have been published that provide construction methods 
for quantum deletion error-correcting codes 
\cite{hagiwara2020four,nakayama2020single,shibayama2020new,ouyang2021permutation,shibayama2021permutation,matsumoto2022constructions,shibayama2022equivalence}. 
However, there are some issues that need to be addressed.
Firstly, there is a lack of a method 
that utilizes previously known coding theory 
in the construction of quantum deletion codes. 
Their methods for quantum deletion code construction 
are based on combinatorial or type theory/permutation invariance. 
Secondly, the code rate of currently known quantum deletion error-correcting codes 
is limited and small.
It is desirable to have various choices for the code rate and achieve a high code rate.  
Thirdly, the decoders require the information of
the number of quantum systems for received quantum states.
For example, if no deletion error happens, 
the decoder does not change the received quantum state.
On the other hand, if deletion error happens,
the decoder performs error-correction to the state.
It means that the decoder requires a photon counter 
that does not change quantum states.
Lastly, the currently known quantum deletion codes handle mainly single error-correction.

This manuscript presents a construction method for
quantum deletion error-correcting codes by leveraging quantum Reed-Solomon codes.
The resulting code is capable of multiple deletion errors.
The proposed construction can achieve any desired code rate
under a fixed number of multi-deletions.
It should be noted that the constructed quantum deletion error-correcting codes have a small relative distance,
where the relative distance is the ratio of the minimum distance to the code length.
This small relative distance arises due to 
the necessity of handling codes with large lengths 
while the maximum number of correctable deletion errors is fixed.
Constructing codes with larger relative distances remains a future challenge.
Our decoder does not require a photon counter for received states
under the assumption where the number of deletions is less than or equal to
the previously fixed maximum number of deletions.
\section{Preliminaries}\label{Pre}
This paper assumes the fundamental knowledge of quantum information theory and 
classical/quantum Reed-Solomon codes, particularly described in Sections \ref{ssec:FQIT} and \ref{ssec:CQRSC}.

Throughout this paper,
$n, t, E, K_C, K_D$ and $N$ denote positive integers.
$[n]$ denotes a set of $n$-integers $\{1,2,\ldots ,n\}$.
For $0 \le t \le n$,
$\binom{[n]}{t}$ denotes the set of vectors $(i_1, i_2, \dots, i_t)$
such that $1 \le i_1 < i_2 < \dots < i_t \le n$.
For a set $X$, $\# X$ denotes the cardinality of $X$.

For a positive integer $E$,
$\mathbb{F}_{2^E}$ denotes a field of size $2^E$ 
which is an extended field of the binary field $\mathbb{F}_2 = \{0, 1\}$,
and thus an element of $\mathbb{F}_{2^E}$ can be represented as 
an $E$-bit sequence.

$\bm{0}^{(t)}$ represents the $t$-repetition of the bit $0$, i.e., 
$00 \dots 0 \in \{0,1\}^t$,
and
$\bm{1}^{(t)}$ represents the $t$-repetition of the bit $1$, i.e., 
$11 \dots 1 \in \{0,1\}^t$.

\subsection{Fundamentals of Quantum Information Theory}\label{ssec:FQIT}

For a square matrix $ \tau $ over a complex field $\mathbb{C}$,  $\mathrm{Tr}( \tau )$ denotes
the sum of the diagonal elements of $ \tau $. 
Define $\ket{0},\ket{1}\in \mathbb{C}^2$ as $\ket{0}:=(1,0)^T,\ket{1}:=(0,1)^T$, and 
$\ket{\bm{x}}$ as $\ket{\bm{x}}:=\ket{x_1}\otimes \ket{x_2}\otimes \cdots \otimes \ket{x_n} \in \mathbb{C}^{2 \otimes n}$
for a bit sequence $\bm{x}=x_1x_2\cdots x_n\in \{0,1\}^n$. 
Here $\otimes$ is the tensor product operation, $T$ is the transpose operation, and 
$\mathbb{C}^{2 \otimes n}$ is the $n$th tensor product of $\mathbb{C}^2$, i.e.,
$\mathbb{C}^{2 \otimes n} := (\mathbb{C}^2)^{\otimes n}$.
We may identify $\mathbb{C}^{2 \otimes n}$ with $\mathbb{C}^{2^n}$ as a complex vector space.
We denote by $S(\mathbb{C}^{2\otimes n})$ the set of all density matrices of order $2^n$.
A density matrix is employed to represent a quantum state.
The quantum state $\tau$ that relates to $n$ qubits is represented in an element of
$S(\mathbb{C}^{2\otimes n})$.
Any state $\tau$ is represented in the following form:
\begin{align}
\tau=\sum_{\bm{x},\bm{y}\in \{0,1\}^n} \tau_{\bm{x},\bm{y}}\ket{x_1}\bra{y_1}\otimes \cdots \otimes \ket{x_n}\bra{y_n},
\end{align}
where $\tau_{\bm{x},\bm{y}} \in \mathbb{C}$ and $\bra{a} := \ket{a}^\dagger$,
i.e. $\bra{a}$ is the conjugate transpose of $\ket{a}$.
The quantum state of a subsystem that relates to 
$(n-1)$ qubits is described by the partial trace defined below.

\begin{Definition}[Partial Trace, $\mathrm{Tr}_i$]
Let $i\in [n]$.
Define a function $\mathrm{Tr}_i:S(\mathbb{C}^{2\otimes n})\rightarrow S(\mathbb{C}^{2\otimes (n-1)})$
as
\begin{align*}
\mathrm{Tr}_i( \tau )&:=&
\sum_{\bm{x},\bm{y}\in \{0,1\}^n}
\tau_{\bm{x},\bm{y}}\cdot \mathrm{Tr}(\ket{x_i}\bra{y_i}) \ket{x_1}\bra{y_1}\otimes \\
&&\cdots \otimes \ket{x_{i-1}}\bra{y_{i-1}}\otimes \ket{x_{i+1}}\bra{y_{i+1}}\otimes \\
&&\cdots \otimes \ket{x_n}\bra{y_n},
\end{align*}
where
\begin{align}
\tau=\sum_{\bm{x},\bm{y}\in \{0,1\}^n} \rho_{\bm{x},\bm{y}}\ket{x_1}\bra{y_1}\otimes \cdots \otimes \ket{x_n}\bra{y_n}
\end{align}
and $\mathrm{Tr}( \ket{ x_i } \bra{ y_i } )$ is $1$ if $x_i = y_i$, otherwize $0$.
The map $\mathrm{Tr}_i$ is called the partial trace.
\end{Definition}

\subsection{Classical/Quantum Reed-Solomon Codes}\label{ssec:CQRSC}

Reed-Solomon codes are a class of classical error-correcting codes.
It is known that every Reed-Solomon code is an MDS code,
in other words,
$N - K = d + 1$ holds,
where $N$ is the code length, $K$ is the dimension, and
$d$ is the minimum Hamming distance.

Here we recall an instance of Reed-Solomon code construction.
Assume that $t, E, K_C, K_D,$ and $N$ satisfy
 $t \le K_C \le N-t$, $N - K_C \le K_D$, and $N \le 2^E - 1$.
Let $\alpha$ be a primitive element of $\mathbb{F}_{2^E}$.
Define the $K_D$-by-$N$ matrix 
$H_{D^\perp} := (h_{i,j})_{1 \le i \le K_D, 1 \le j \le N}$ as follows:
\begin{align}
h_{i,j} := \alpha^{(i-1)(j-1)}.
\end{align}
Then $D^{\perp}$ is defined as a linear code over $\mathbb{F}_{2^E}$
with parity-check matrix $H_{D^{\perp}}$,
which is a (shortened) Reed-Solomon code over $\mathbb{F}_{2^E}$.
In other words,
\begin{align}
D^{\perp}
=
\{ \mathbf{x} \in (\mathbb{F}_{2^{E}})^N \mid H_{D^\perp} \mathbf{x}^T = (\mathbf{0}^{(K_D)})^T \}.
\end{align}
The dimension of $D^{\perp}$ is $N - K_D$.
Next, 
we define the $(N-K_C)$-by-$N$ matrix $H_C := (h_{i,j})_{1 \le i \le N-K_C, 1 \le j \le N}$ 
as a submatrix of $H_C$.
We define $C$ as 
a linear code over $\mathbb{F}_{2^E}$ with its parity-check matrix $H_C$,
which is also a (shortened) Reed-Solomon code.
The dimension of $C$ is $K_C$
The minimum Hamming distance of $C$ over $\mathbb{F}_{2^E}$ of $C$ is $N-K_C+1$,
which is greater than $t$ because $K_C \le N-t$.
Therefore, $C$ is a classical $t$-erasure error-correcting code over $\mathbb{F}_{2^E}$.
Since $H_C$ is a submatrix of $H_{D^{\perp}}$,
$C$ includes $D^{\perp}$, i.e. $D^{\perp} \subset C$.

Recall that a quantum Reed-Solomon code $\mathcal{R}$ is a non-binary CSS code
constructed from a pair of classical Reed-Solomon codes $C$ and $D^{\perp}$
that have the same code length, say $N$, and satisfy $D^{\perp} \subset C$.
The quantum code $\mathcal{R}$ is realized as a state of $N$ quantum systems of level $2^{E}$,
i.e., $\mathcal{R} \subset S( \mathbb{C}^{2^E \otimes N})$.
The code length of $\mathcal{R}$ of level $2^E$ is said to be $N$.
The complex dimension of $\mathcal{R}$ is
\begin{align}
2^{E (K_C + K_D - N)}.
\label{eq:qrs_dim}
\end{align}

The quantum minimum distance $d_{\mathcal{R}}$ of level $2^E$
is lower bounded by both the minimum Hamming distance $d_C$ of $C$ and 
the minimum Hamming distance $d_D$ of the dual code of $D^{\perp}$,
i.e., 
\begin{align}
d_{\mathcal{R}} \ge \min\{ d_C, d_D \} = \min\{N - K_C +1, N - K_{D} + 1 \}
\label{eq:minR}
\end{align}
since $d_C = N - K_C + 1$ and $d_D = N - K_D + 1$.

It means that the quantum Reed-Solomon code $\mathcal{R}$  is
capable of quantum $d_{\mathcal{R}}-1$ or less erasure error,
where a quantum $t'$-erasure error is a quantum error $\mathcal{E}_{\bm{i}}$
that transforms states of $t'$ quantum systems of level $2^E$
and announces the $t'$-error position $\bm{i} \in \binom{[N]}{t'}$.
A decoder for a quantum Reed-Solomon code utilize
both the received state and the error position information $\bm{i}$.
Denoting the encoder by 
$\mathrm{Enc}_{\mathcal{R}}
$,
the decoder by 
$\mathrm{Dec}_{\mathcal{R}}
$,
and erasure error operation by $\mathcal{E}_{\bm{i}}$ at position $\bm{i}$,
we have
\begin{align}
\mathrm{Dec}_{\mathcal{R}} 
 \left( \bm{i}, \mathcal{E}_{\bm{i}} \circ \mathrm{Enc}_{\mathcal{R}}( \sigma ) \right)
  = \sigma \label{eq:qrs_ec}
\end{align}
for any $\sigma \in S( \mathbb{C}^{2^E \otimes (K_C + K_D - N) } )$
and any $\bm{i} \in \binom{ [N] }{ t' }$ with $0 \le t' \le d_{\mathcal{R}}-1$.

Since $\mathbb{C}^{2^E \otimes N}$ is isomorphic to $\mathbb{C}^{2 \otimes NE}$
as a complex linear space, 
$\mathcal{R}$ is also realized as a set of states of $NE$ quantum systems of level $2$,
i.e., $\mathcal{R} \subset S( \mathbb{C}^{2^\otimes NE})$.
The codeword is  realized by $NEN$ quantum systems of level $2$,
say $r_1, r_2, \dots, r_{NE}$.
Let us devide the quantum systems into $N$ blocks so that
the $i$th block consists of $E$ quantum systems 
$r_{1 + (i-1)E}, r_{2 + (i-1)E}, \dots r_{E + (i-1)E}$.
In this case, $\mathrm{Dec}_{\mathcal{R}}$ is capable of $t$ or less block erasure.
The index $\bm{i}$ shows the block error positions.

\begin{Definition}[Code Rate]
Let $Q$ be a quantum code such that
$Q$ is a subvector space of $\mathbb{C}^{2 \otimes n}$ as a complex vector space.
If $Q$ is of dimension $M$,
the code rate of $Q$ is defined as $(\log_2 M) / n$. 
\end{Definition}
For example,
the code rate of $\mathcal{R}$ is 
\begin{align}
E (K_C + K_D - N)/ n = (K_C + K_D - N)/ N, \label{eq:qrs_rate}
\end{align}
by Eq.(\ref{eq:qrs_dim}) and $n=NE$.

\section{Quantum Deletion Error}\label{Deletion}
This section provides definitions and observations 
on quantum deletion errors from the perspective of changes in quantum states. 
In line with previous studies (e.g., \cite{nakayama2020first,ouyang2021permutation,shibayama2020new}), 
we define quantum deletion errors using partial trace operations.

\begin{Definition}[Quantum Single/Multi Deletion Error $\mathcal{D}_i$, $\mathcal{D}_{\mathbf{i}}$]
First, we define a quantum single deletion error as a mapping 
from a state $\tau \in S(\mathbb{C}^{2 \otimes n})$ 
to a state $\mathrm{Tr}_i (\tau) \in S(\mathbb{C}^{2 \otimes (n-1)})$, 
where $n$ and $i$ are positive integers with $1 \le i \le n$. 
We denote this error by $\mathcal{D}_i$.

A quantum multi deletion error is a composition of quantum single deletion errors, 
represented as $\mathcal{D}_{i_1} \circ \mathcal{D}_{i_2} \circ \dots \circ \mathcal{D}_{i_t}$, 
for some $\mathbf{i} = (i_1, i_2, \dots, i_t) \in \binom{[n]}{t}$. 
Here, $\circ$ denotes the composition of mappings. 
We denote this error by $\mathcal{D}_{\mathbf{i}}$.
\end{Definition}

We aim to observe quantum deletion errors using quantum systems.
For each $1 \le j \le n$, 
let $p_j$ represent a physical system whose state can be described as 
a two-level quantum state in $S(\mathbb{C}^2)$. 
Examples of physical systems include trapped ions, quantum dots, nitrogen-vacancy centers, photons, and others. 
Their states are commonly referred to as qubits. 
The state of $n$ systems, $p_1, p_2, \dots, p_n$, is represented as an element of $S(\mathbb{C}^{2 \otimes n})$, 
say $\tau$. 
If we fix $1 \le i \le n$ and focus on the remaining $n-1$ systems 
by excluding $p_i$, their state can be described by $\mathrm{Tr}_i (\tau)$. 
As per the definition of deletion errors, this state is represented as $\mathcal{D}_i (\tau)$. 
Similarly, for $\mathbf{i} = (i_1, i_2, \dots, i_t) \in \binom{[n]}{t}$, 
the resulting state after a multi deletion error corresponds 
to the quantum state of the remaining $n-t$ systems, 
represented as $D_{\mathbf{i}}(\tau)$. 
Therefore, deletion errors can be interpreted as errors caused 
by excluding specific physical systems.
An example of excluding systems is loss of photons. 
The loss of photons in quantum communication has been extensively studied [34, 36]. 
Quantum deletion errors serve as a model for errors caused by such losses.

Suppose that $p_i$ is excluded, and the state became $\mathcal{D}_i (\tau)$. 
If we possess $p_i$, we can recover the quantum state 
from $\mathcal{D}_i (\tau)$ to $\tau$ by inserting $p_i$ between $p_{i-1}$ and $p_{i+1}$. 
This can be considered as a deletion error-correction. 
Alternatively, let us introduce another system denoted as $q$, 
and perform a state swap operation between $p_i$ and $q$. 
By inserting $q$ between $p_{i-1}$ and $p_{i+1}$,
the systems are $p_1, p_2, \dots, p_{i-1}, q, p_{i+1}, \dots, p_n$.
Then we can also recover the quantum state from $\mathcal{D}_i (\tau)$ to $\tau$. 
Although the systems are different, the quantum state is the same as the original state $\tau$.
This can be also regarded as a deletion error-correction. 
It should be noted that it is not generally assumed that the excluded system $p_i$ is retained.
Furthermore, knowledge of the exact error position $i$ should not be assumed.

\section{Alternating Sandwich Mapping and Error Locator}
\label{Sec:AsmAndLoc}
This section proposes two algorithms,
Alternating Sandwich Mapping $\mathrm{Asm}$
and Error Locator $\mathrm{Loc}$.
From here, 
the symbol $\mathfrak{o}$ denotes a quantum system of level 2, 
corresponding to the quantum state $\ket{0} \bra{0}$. 
Similarly, $\mathfrak{o}^t$ denotes $t$-$\mathfrak{o}$s,
i,e.,
$\mathfrak{o}, \mathfrak{o}, \dots, \mathfrak{o}$.
In this context, the $\mathfrak{o}^{t}$'s state can be written as 
$\ket{\mathbf{0}^{(t)}} \bra{\mathbf{0}^{(t)}}  \in S(\mathbb{C}^{2 \otimes t})$,
where
$\ket{\mathbf{0}^{(t)}} := 
\ket{0} \otimes \ket{0} \otimes \dots \otimes \ket{0}$. 
Similarly, we use the symbol $\mathfrak{l}^t$ 
to denote a quantum system of level $2^t$, 
whose quantum state is
$\ket{\mathbf{1}^{(t)}} \bra{\mathbf{1}^{(t)}}  \in S(\mathbb{C}^{2 \otimes t})$, where
$\ket{\mathbf{1}^{(t)}} := 
\ket{1} \otimes \ket{1} \otimes \dots \otimes \ket{1}$. 
The notation $\mathfrak{l}^t$ is employed.

\begin{Definition}[Alternating Sandwich Mapping: $\mathrm{Asm}$]
\label{def:asm}
The mapping defined by the following procedure is called 
the alternating sandwich mapping, denoted as $\mathrm{Asm}$.
\begin{enumerate}
\item[Input]: 
A quantum state $\rho \in S(\mathbb{C}^{2 \otimes NE})$,
which is realized by  
$NE$ quantum systems $r_1, r_2, \dots, r_{NE}$ of level 2.
\item[Output]: A quantum state $\tau \in S(\mathbb{C}^{2 \otimes N(E+2t)})$.
\item 
Divide the $NE$ quantum systems into 
$N$ blocks of $E$ systems each, 
denoted as $R_{b} := r_{1 + (b-1)E}, \dots, r_{E + (b-1)E}$ $(1 \le b \le N)$.
Append  
$\mathfrak{o}^t \mathfrak{l}^t $ to the end of each block.
Thus, we have $N(E+2t)$ quantum systems:
$R_1, \mathfrak{o}^t, \mathfrak{l}^t$,
$R_2, \mathfrak{o}^t, \mathfrak{l}^t$,
$\dots$,
$R_N, \mathfrak{o}^t, \mathfrak{l}^t$.
\item 
Output the state $\tau \in S(\mathbb{C}^{2 \otimes N(E+2t)})$ 
of these $N(E+2t)$ quantum systems.
\end{enumerate}
\end{Definition}

\begin{figure}[htbp]
\begin{center}
\includegraphics[width=12cm,bb=0 0 725 362]{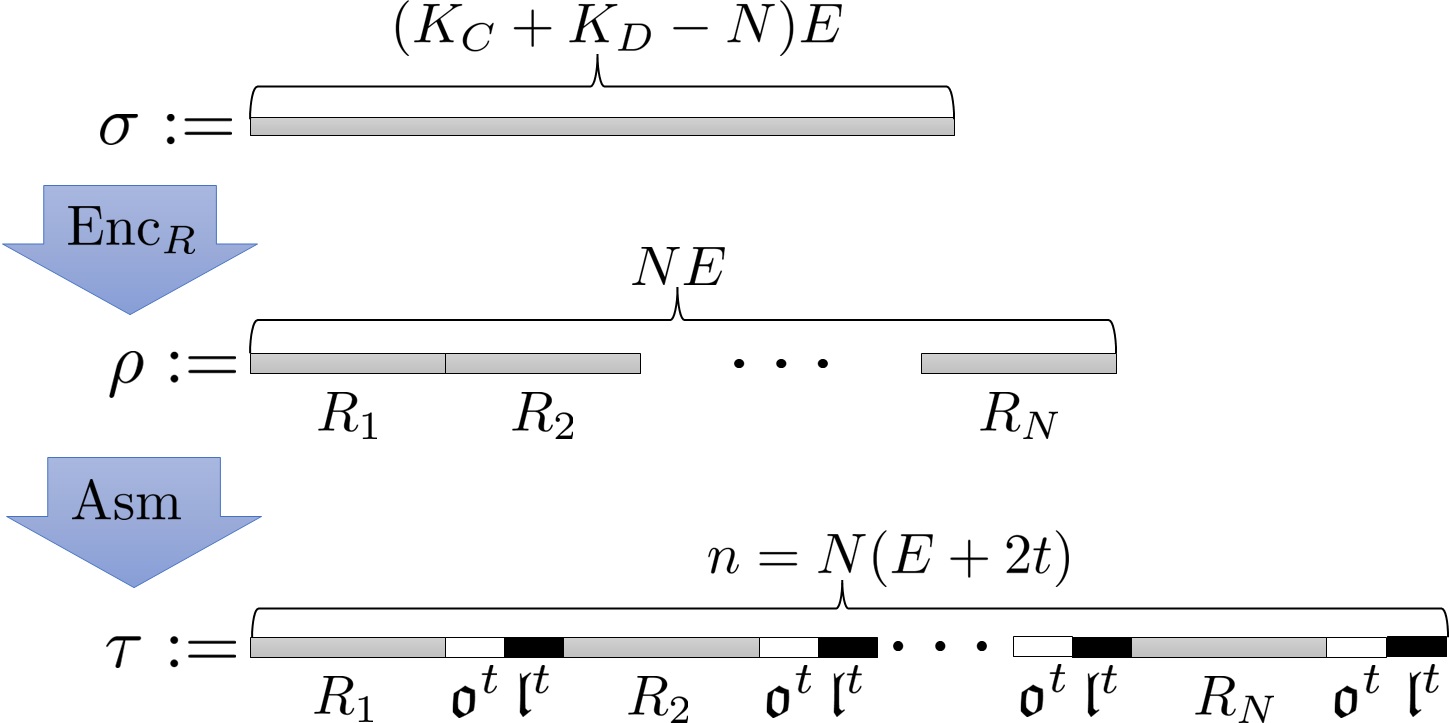} 
\caption{Reed-Solomon Encoder $\mathrm{Enc}_R$ and Alternating Sandwich Mapping $\mathrm{Asm}$}
\label{figure:Enc}
\end{center}
\end{figure}

Consider a case, in Definition \ref{def:asm}, where
 $\rho$ is a pure state given by $\rho = \ket{ \phi } \bra{ \phi }$
with a certain state
\begin{align}
\ket{ \phi } = 
\sum_{ \mathbf{c} \in (\mathbb{F}_{2^{E}})^N  }
  \alpha_{\mathbf{c}} \ket{ c_1 } \ket{ c_2} \dots \ket{ c_N },
\end{align}
where $\mathbf{c} = c_1 c_2 \dots c_N$.
In this case,
$\tau := \mathrm{Asm}( \rho )$ is also a pure state $\ket{ \Phi }\bra{ \Phi }$, and
\begin{align*}
&\ket{ \Phi } = \\
&\sum_{ \mathbf{c} \in (\mathbb{F}_{2^E})^N }
  \alpha_{\mathbf{c}} 
   \ket{ c_1  \mathbf{0}^{(t)} }
   \ket{ \mathbf{1}^{(t)}  c_2  \mathbf{0}^{(t)} }
   \dots
   \ket{ \mathbf{1}^{(t)}  c_N  \mathbf{0}^{(t)} }
   \ket{ \mathbf{1}^{(t)} }.
\end{align*}

\begin{Lemma}\label{lem:key}
Let $n$, $t$, $t'$, and $a$ be positive integers 
satisfying $1 \le t' \le t \le n$ and $a + 2t \le n$.
Let $\tau \in S( \mathbb{C}^{2 \otimes n} )$
be a state represented as
\begin{align}
\tau = \sum_{j} c_j \rho_{1,j} \otimes \rho_{2,j} \otimes \rho_{3,j} \otimes \rho_{4,j},
\end{align}
where $c_j \in \mathbb{C}$,
$\rho_{1,j} \in S( \mathbb{C}^{2 \otimes a} )$,
$\rho_{2,j} = \ket{ \mathbf{0}^{(t)} } \bra{ \mathbf{0}^{(t)} }$,
$\rho_{3,j} = \ket{ \mathbf{1}^{(t)} } \bra{ \mathbf{1}^{(t)} }$,
and
$\rho_{4,j} \in S( \mathbb{C}^{2 \otimes (n-a-2t)} )$.
Assume that
$\tau$ is
realized by $n$ quantum systems $q_1, q_2, \dots, q_n$.

After a $t'$-deletion error occured in the $n$ quantum systems,
the number of deletions in the first $a+t$ quantum systems $q_1, q_2, \dots, q_{a+t}$
is determined by the measurement in the computational basis $\{ \ket{0}, \ket{1} \}$
to quantum systems at positions $a+1$ through $a+t$.
The number of deletions is equal to the Hamming weight of the measurement outcomes.
\end{Lemma}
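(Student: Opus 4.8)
The plan is to trace the surviving quantum systems through the deletion and the subsequent relabeling, and to show that the $t$ systems read out in the \emph{window} (new positions $a+1$ through $a+t$) are always drawn from the two fixed marker blocks and never from the arbitrary regions, so that their Hamming weight records exactly the number of left-hand deletions.

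First I would fix notation matching the tensor decomposition of $\tau$. Write the $n$ systems $q_1,\dots,q_n$ as four consecutive regions: region $A$ = positions $1,\dots,a$ (carrying the arbitrary factor $\rho_{1,j}$), region $Z$ = positions $a+1,\dots,a+t$ (the all-zero block, since $\rho_{2,j}=\ket{\mathbf{0}^{(t)}}\bra{\mathbf{0}^{(t)}}$), region $O$ = positions $a+t+1,\dots,a+2t$ (the all-one block, since $\rho_{3,j}=\ket{\mathbf{1}^{(t)}}\bra{\mathbf{1}^{(t)}}$), and region $B$ = positions $a+2t+1,\dots,n$ (carrying $\rho_{4,j}$); the hypothesis $a+2t\le n$ makes this well defined. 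Since $\mathcal{D}_{\mathbf{i}}$ is a fixed composition of single deletions, its deleted-position set is a fixed (if unknown) subset of $[n]$, and each single deletion $\mathrm{Tr}_i$ removes one tensor factor while keeping the rest in their original order. Let $d$ be the number of deleted positions lying in $A\cup Z$ (the first $a+t$ systems) and let $e$ be the number lying in $O$. The goal is to prove that the window read-out has Hamming weight exactly $d$, independently of $\mathbf{i}$ and of the data stored in $A$ and $B$.

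The combinatorial core is to locate the survivors after relabeling. Because order is preserved, the $(a+t)-d$ survivors of $A\cup Z$ occupy new positions $1,\dots,(a+t)-d$, the $t-e$ surviving $O$-systems (each still carrying $\ket{1}$) occupy the next block of new positions, and the survivors of $B$ come afterwards. I would then verify the two boundary inequalities that trap the window $[a+1,a+t]$ strictly inside the surviving marker systems: on the left, region $A$ contributes at most $a$ survivors, so new position $a+1$ already lies beyond $A$; on the right, the last surviving marker sits at new position $a+2t-d-e$, and $d+e\le t'\le t$ forces $a+2t-d-e\ge a+t$, so new position $a+t$ has not yet reached region $B$. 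The inequality $d+e\le t'\le t$ is the key enabling fact, and the step I expect to require the most care is precisely this boundary analysis: one must confirm that for \emph{every} deletion pattern, including the edge cases $d=0$ and $d=t$, no system from the arbitrary regions $A$ or $B$ can intrude into the window, since that is exactly what makes the read-out well defined.

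Granting this, a short count finishes the argument: within the window the surviving $Z$-systems supply the leftmost $t-d$ entries (all $\ket{0}$) and the surviving $O$-systems supply the rightmost $d$ entries (all $\ket{1}$), the two counts summing to the window length $t$, so the measured Hamming weight is $d$. Finally I would note that this conclusion is deterministic even though $\tau$ is a mixture/superposition: the window substring $\mathbf{0}^{(t-d)}\mathbf{1}^{(d)}$ is identical for every term $j$, because it depends only on the fixed marker blocks $\rho_{2,j},\rho_{3,j}$ and on $\mathbf{i}$, never on $\rho_{1,j}$ or $\rho_{4,j}$. Hence after the deletion and the trace over all non-window systems, the reduced state on the window positions is the pure computational-basis state $\ket{\mathbf{0}^{(t-d)}\mathbf{1}^{(d)}}\bra{\mathbf{0}^{(t-d)}\mathbf{1}^{(d)}}$, the scalar contributions of the traced-out arbitrary factors merely restoring the normalization. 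Measuring this state in the computational basis therefore returns an outcome of Hamming weight $d$ with certainty, recovering the number of left-hand deletions as claimed.
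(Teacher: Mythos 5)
Your proposal is correct and follows essentially the same route as the paper: both partition the deletions by region, reduce the post-deletion state to a tensor product in which the surviving marker systems are tracked by position, and use the same two boundary inequalities (at most $a$ survivors to the left of the window, and $d+e\le t'\le t$ on the right) to conclude that the window reads $\mathbf{0}^{(t-d)}\mathbf{1}^{(d)}$ and hence has Hamming weight $d=t_1+t_2$. Your explicit remark that the window substring is independent of the index $j$, so the measurement outcome is deterministic despite the mixture, is a slightly more careful statement of what the paper handles implicitly via linearity of the partial trace.
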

\begin{proof}
Let us denote the $t'$-deletion error by $\mathbf{i} = (i_1, i_2, \dots, i_{t'})$.
Define integers $t_1, t_2, t_3, t_4$ as follows:
\begin{align*}
t_1 &:= \# \{ i_j \mid 1 \le j \le t', 1 \le i_j \le a \},\\
t_2 &:= \# \{ i_j \mid 1 \le j \le t', a+1 \le i_j \le a+t \},\\
t_3 &:= \# \{ i_j \mid 1 \le j \le t', a+t+1 \le i_j \le a+2t \},\\
t_4 &:= \# \{ i_j \mid 1 \le j \le t', a+2t+1 \le i_j \le n \}.
\end{align*}
In other words, $t_h$ is the number of deletions occured in $\rho_{h,j}$ for $1 \le h \le 4$.
We show that $t_1 + t_2$ is equal to the Hamming weight of the outcomes.

Divide $\mathbf{i}$ into 
four parts $\mathbf{i}_1, \mathbf{i}_2, \mathbf{i}_3, \mathbf{i}_4$,
such that $\mathbf{i}_h \in \binom{[n]}{t_h}$ 
and $\mathbf{i}_h$ is the indices of deletion errors in $\rho_{h,j}$ for $1 \le h \le 4$.
Hence 
\begin{align}
\mathcal{D}_{\mathbf{i}} = 
\mathcal{D}_{\mathbf{i}_1} \circ
\mathcal{D}_{\mathbf{i}_2} \circ
\mathcal{D}_{\mathbf{i}_3} \circ
\mathcal{D}_{\mathbf{i}_4}.
\end{align}
Therefore, we have
\begin{align*}
&\mathcal{D}_{\mathbf{i}} (\tau)\\
=& 
\mathcal{D}_{\mathbf{i}_1} \circ
\mathcal{D}_{\mathbf{i}_2} \circ
\mathcal{D}_{\mathbf{i}_3} \circ
\mathcal{D}_{\mathbf{i}_4} (\tau)\\
=& 
\sum_{j} c_j 
\mathcal{D}_{\mathbf{i}_1} \circ
\mathcal{D}_{\mathbf{i}_2} \circ
\mathcal{D}_{\mathbf{i}_3} \circ
\mathcal{D}_{\mathbf{i}_4} (\rho_{1,j} \otimes \rho_{2,j} \otimes \rho_{3,j} \otimes \rho_{4,j})\\
=& 
\sum_{j} c_j 
\mathcal{D}_{\mathbf{i}_1} \circ
\mathcal{D}_{\mathbf{i}_2} \circ
\mathcal{D}_{\mathbf{i}_3}
(
\rho_{1,j} \otimes \rho_{2,j} \otimes \rho_{3,j} \otimes
\mathcal{D}_{\mathbf{i}_4} ( \rho_{4,j})
)\\
=& 
\sum_{j} c_j 
\mathcal{D}_{\mathbf{i}_1} \circ
\mathcal{D}_{\mathbf{i}_2}
(
\rho_{1,j} \otimes \rho_{2,j} \otimes 
\ket{\mathbf{1}^{t-t_3} } \bra{\mathbf{1}^{t-t_3} }
\otimes \mathcal{D}_{\mathbf{i}_4} ( \rho_{4,j})
)\\
=& 
\sum_{j} c_j 
\mathcal{D}_{\mathbf{i}_1}
(
\rho_{1,j} \otimes 
\ket{\mathbf{0}^{t-t_2} } \bra{\mathbf{0}^{t-t_2} } \otimes 
\ket{\mathbf{1}^{t-t_3} } \bra{\mathbf{1}^{t-t_3} } \otimes 
\mathcal{D}_{\mathbf{i}_4} ( \rho_{4,j})
)\\
=&
\sum_{j} c_j 
\mathcal{D}_{\mathbf{i}_1}
(\rho_{1,j})
\otimes 
\ket{\mathbf{0}^{t-t_2} } \bra{\mathbf{0}^{t-t_2} } \otimes 
\ket{\mathbf{1}^{t-t_3} } \bra{\mathbf{1}^{t-t_3} } \otimes 
\mathcal{D}_{\mathbf{i}_4} ( \rho_{4,j}).
\end{align*}

Note that
$\mathcal{D}_{\mathbf{i}_1} (\rho_{1,j}) \in \mathbb{C}^{2 \otimes (a - t_1)},
\ket{\mathbf{0}^{t-t_2} } \bra{\mathbf{0}^{t-t_2} } \in \mathbb{C}^{2 \otimes (t-t_2)}
$
and 
$\ket{\mathbf{1}^{t-t_3} } \bra{\mathbf{1}^{t-t_3} } \in  \mathbb{C}^{2 \otimes (t-t_3)}$.
Meanwhile, we have
\begin{align}
a - t_1 < a+1
\end{align}
and 
\begin{align}
a+t &\le (a-t_1) + (t - t_2) + (t-t_3) \\
&= a + t + (t - t_1-t_2-t_3).
\end{align}
Therefore, the state associated to the quantum systems from 
$(a+1)$th to $(a+t)$th positions is
$
\ket{\mathbf{0}^{t-(t_1+t_2)} } \bra{\mathbf{0}^{t-(t_1 + t_2)} } \otimes 
\ket{\mathbf{1}^{(t_1 + t_2) } } \bra{\mathbf{1}^{(t_1 + t_2)} }
$.
By the measurement in the computational basis,
the outcomes are $t-(t_1 + t_2)$ zeros and $t_1+t_2$ ones.
In particular, the Hamming weight is $t_1 + t_2$.
\end{proof}

The lemma above motivates us to define the error locator algorithm below.
Recall that the alternating sandwich mapping inserts
$\ket{ \mathbf{0}^{(t)} } \bra{ \mathbf{0}^{(t)} } \otimes \ket{ \mathbf{1}^{(t)} } \bra{ \mathbf{1}^{(t)} }$
between consective blooks of Reed-Solomon codewords.
Figure \ref{figure:Loc} helps to illustrate 
how the algorithm determines the deletion error blocks.

\begin{Definition}[Block Error Locator: $\mathrm{Loc}$]
The mapping defined by the following procedure is called 
the block error locator, denoted as $\mathrm{Loc}$.

Input:
\begin{itemize}
\item A quantum state $\tau'$
which is realized by 
quantum systems $q_1, q_2, \dots, q_{n'}$ of level 2,
where $n' \ge N(E+2t)-t$.
\end{itemize}

Output:
\begin{itemize}
\item $\bm{i'} \in \bigcup_{0 \le t' \le t} \binom{[N]}{t'}$ and
a quantum state $\rho' \in S(\mathbb{C}^{2 \otimes NE})$,
where $\sigma'$ represents a quantum state associated with 
$NE$ quantum systems $y_1, y_2, \dots, y_{NE}$ of level 2.
\end{itemize}

\begin{enumerate}
    \item \textbf{Initialization}:
    \begin{itemize}
        \item[I-1] Initialize $\mathbf{i}' := \emptyset$ and $w_0 := 0$.
        \item[I-2] Append $\mathfrak{l}^{t}$ to the tail of 
        the input quantum systems $q_1, q_2, \dots, q_{n'}$,
        and $\mathbf{Y}$ denotes the systems.
        Hence $\mathbf{Y} = q_1, q_2, \dots, q_{n'} \mathfrak{l}^{t}$.
        \item[I-3] Rename the quantum systems of $\mathbf{Y}$ to $Y_1, Y_2, \dots $.
    \end{itemize}
    \item \textbf{Measurement}:
    \begin{itemize}
        \item[M-1]
        For each $1 \le b \le N$,
        measure each quantum system of the last $t$ quantum systems 
        $Y_{b(E+2t)-t+1}, Y_{b(E+2t)-t+2}, \dots, Y_{b(E+2t)}$
        in the computational basis $\{ \ket{0}, \ket{1} \}$.
        \item[M-2]
        Denote the outcome for $Y_{b(E+2t)-t+i}$ by $s_{b,i} \in \{0, 1\}$.
        Regarding the outcomes as a $t$-bit sequence
        $\mathbf{s}_b := s_{b,1} s_{b,2} \dots s_{b,t} \in \{0, 1\}^t$.
    \end{itemize}
    \item \textbf{Error Detection}:
    \begin{itemize}
        \item[E-1]
        For each $1 \le b \le N$,
        calculate the Hamming weight of $\mathbf{s}_b$, 
        and denote it by $w_b$.
        \item[E-2] If $w_{b+1} = w_b$:
        \begin{itemize}
            \item[E-2-1] Define
             $\mathbf{B}_{b} 
             := Y_{\beta_b+1}, Y_{\beta_b+2}, \dots, Y_{\beta_b+E}$,
             where $\beta_b := (b-1)(E+2t)+w_b+t$.
        \end{itemize}
        \item[E-3] Otherwise, i.e., $w_{b+1} \neq w_b$:
        \begin{itemize}
            \item[E-3-1] Define $E$-quantum systems $\mathbf{B}_{b}$ as $\mathbf{o}^{E}$.
            \item[E-3-2] Update $\mathbf{i}' := \mathbf{i}' \cup \{ b \}$.
        \end{itemize}
    \end{itemize}
    \item \textbf{Terminate}:
    \begin{itemize}
    \item[T-1] Define a quantum state $\rho'$ as the state for
    $EN$ quantum systems
    $\mathbf{B}_1, \mathbf{B}_2, \dots, \mathbf{B}_N$.
    \item[T-2]
    Output $\mathbf{i}'$ and $\rho'$.
    \end{itemize}
\end{enumerate}
\end{Definition}

\begin{figure}[htbp]
\begin{center}
\includegraphics[width=12cm,bb=0 0 769 377]{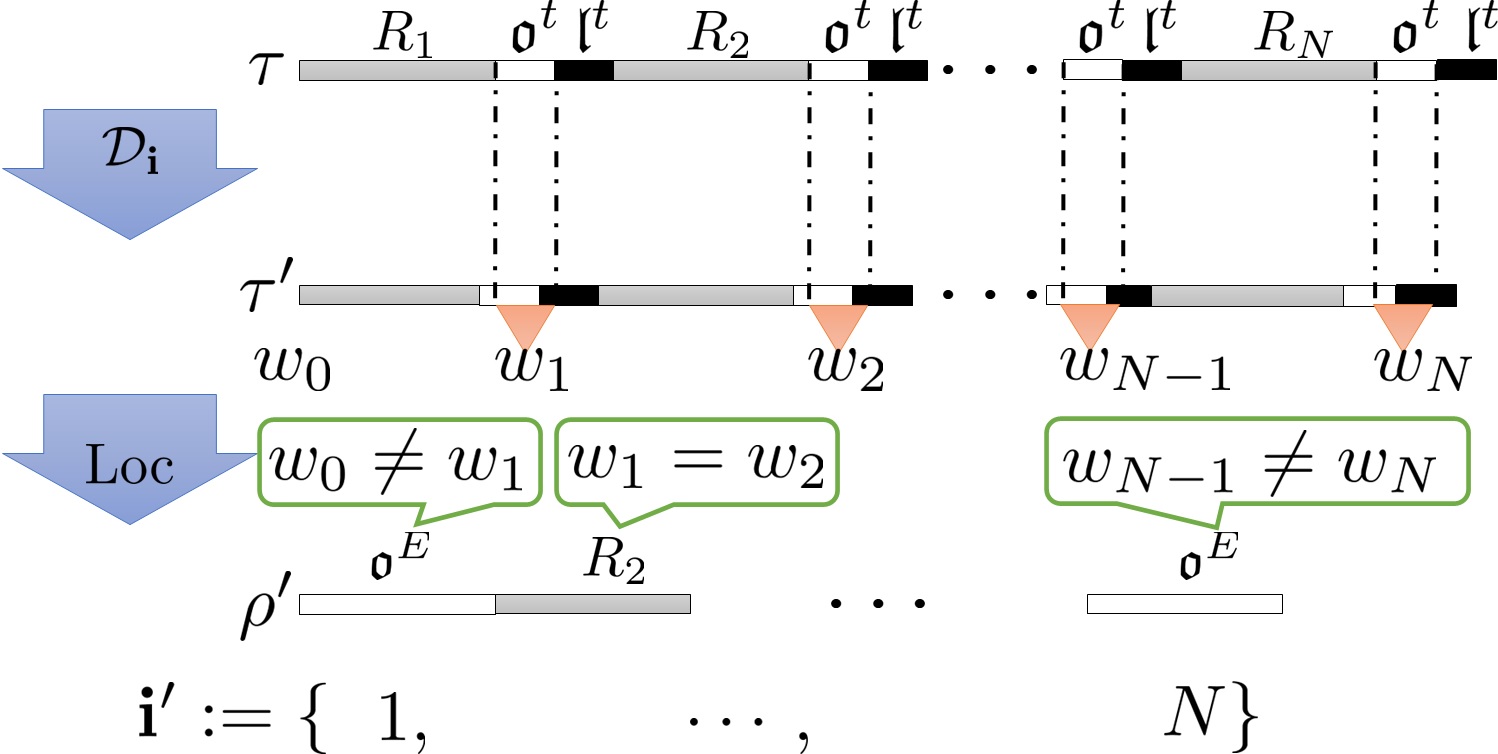} 
\caption{Deletion Error $\mathcal{D}_{\mathbf{i}}$ and Block Error Locator $\mathrm{Loc}$:
$1$ and $N$ belong to $\mathbf{i}$ but $2$ does not.}
\label{figure:Loc}
\end{center}
\end{figure}

\section{Quantum Multi Deletion Error-Correcting Codes from Quantum Reed-Solomon Codes}

\begin{Definition}[$\mathrm{Enc}$, $\mathrm{Dec}$, $\mathcal{Q}$]
Let $\mathrm{Enc}_R $ be
the encoder of a quantum Reed-Solomon code,
and $\mathrm{Dec}_R $ the erasure error-correcting decoder 
for the quantum Reed-Solomon code.
Let $\mathrm{Asm}$ denote the Alternating Sandwich Mapping 
and $\mathrm{Loc}$ the block error LOCator.

We define $\mathrm{Enc}$ and $\mathrm{Dec}$ as follows:
\begin{align}
\mathrm{Enc} &:= \mathrm{Asm} \circ \mathrm{Enc}_R.\\
\mathrm{Dec} &:= \mathrm{Dec}_R \circ \mathrm{Loc}.
\end{align}

Under these conditions, we define our quantum code $\mathcal{Q}$ as follows:
\begin{align}
\mathcal{Q} &:= \mathrm{Enc}( S( \mathbb{C}^{2 \otimes (K_C + K_D - N)E}) ).
\end{align}
\end{Definition}

\begin{Theorem}
For any $\bm{i} \in \binom{[n]}{t}$
and any $\sigma \in S(\mathbb{C}^{2 \otimes NE})$,
we have
\begin{align}
\mathrm{Dec} 
\circ \mathcal{D}_{\bm{i}}
\circ \mathrm{Enc} ( \sigma ) = \sigma.
\end{align}
Hence, $\mathcal{Q}$ is capable of correcting quantum $t$ or less deletion error.
\end{Theorem}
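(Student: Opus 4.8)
The plan is to reduce the theorem to the erasure-correction property (\ref{eq:qrs_ec}) of the underlying quantum Reed--Solomon code, with the block error locator $\mathrm{Loc}$ supplying a valid list of erased blocks and a correctly realigned codeword. First I would use linearity of every map in sight ($\mathcal{D}_{\bm i}$, the computational-basis measurements performed inside $\mathrm{Loc}$, and the realignment step) to reduce to the case where $\mathrm{Enc}_R(\sigma)$ is a single computational-basis term $\ket{c_1}\ket{c_2}\cdots\ket{c_N}$; the general statement then follows by forming linear combinations. For such a term, $\mathrm{Enc}(\sigma)=\mathrm{Asm}(\mathrm{Enc}_R(\sigma))$ has exactly the alternating-sandwich form recorded after Definition \ref{def:asm}, so that a factor $\ket{\mathbf{0}^{(t)}}\bra{\mathbf{0}^{(t)}}\otimes\ket{\mathbf{1}^{(t)}}\bra{\mathbf{1}^{(t)}}$ sits between consecutive data blocks.

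Next I would apply Lemma \ref{lem:key} to each block. For each $b$ I tensor-decompose the encoded state around the $b$th marker to match the lemma's hypothesis, taking $\rho_1$ to be everything up to and including $c_b$, taking $\rho_2,\rho_3$ to be the marker factors $\ket{\mathbf{0}^{(t)}}\bra{\mathbf{0}^{(t)}}$ and $\ket{\mathbf{1}^{(t)}}\bra{\mathbf{1}^{(t)}}$, and taking $\rho_4$ to be the remainder. Since $\bm i\in\binom{[n]}{t}$ removes $t'\le t$ systems, the lemma's weight budget is met in every branch, so the $t$ marker systems read by $\mathrm{Loc}$ are in a definite computational-basis state with $w_b$ ones and $t-w_b$ zeros, where $w_b$ is the number of deletions occurring at or before that marker. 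Two consequences follow. First, because this outcome is fixed by $\bm i$ and the marker pattern alone and no data qubit ever enters the measured window, the measurement is non-destructive on the amplitudes $\alpha_{\bm c}$, so $\mathrm{Loc}$ preserves the coherent superposition across blocks. Second, the recorded weights $w_1\le w_2\le\cdots$ are the cumulative deletion counts.

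Then I would carry out the counting and realignment. The increments $w_b-w_{b-1}\ge 0$ telescope to the total $t'\le t$, and $\mathrm{Loc}$ places $b$ into $\bm{i}'$ exactly when an increment is nonzero, so $\#\bm{i}'\le t'\le t$. I would verify completeness of the flag: a block with zero increment has no deletion in its data region, hence $c_b$ survives and is merely translated left by the known cumulative count $w_{b-1}$, so the offset $\beta_b$ extracts $\mathbf{B}_b=c_b$ exactly; a corrupted block always produces a nonzero increment and is overwritten by the dummy $\mathbf{o}^E$. Consequently the output $\rho'$ equals $\mathrm{Enc}_R(\sigma)$ subjected to block erasures precisely at the announced positions $\bm{i}'$ with $\#\bm{i}'\le t$. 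Since the quantum Reed--Solomon code corrects up to $t$ block erasures, i.e. $t\le d_{\mathcal R}-1$ by (\ref{eq:minR}), equation (\ref{eq:qrs_ec}) yields $\mathrm{Dec}_R(\bm{i}',\rho')=\sigma$, which is the desired identity $\mathrm{Dec}\circ\mathcal{D}_{\bm i}\circ\mathrm{Enc}(\sigma)=\sigma$.

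The hard part will be the position bookkeeping underlying the non-destructiveness and the realignment. Concretely, I must show that after up to $t$ deletions the fixed measurement window still lands entirely inside marker systems, so that measuring it cannot collapse the encoded data, and that the offset $\beta_b$ computed from $w_b$ maps every undeleted block to its exact post-deletion location. Both facts rest on the inequalities proved in Lemma \ref{lem:key}, but making them simultaneously consistent across all $N$ blocks, and confirming that every genuinely corrupted block is flagged, requires a careful case analysis according to where each deletion falls, namely inside a data block versus inside either the $\mathbf{0}^{(t)}$ half or the $\mathbf{1}^{(t)}$ half of a marker.
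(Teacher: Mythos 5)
Your proposal follows essentially the same route as the paper's proof: reduce the claim to the erasure-correction property (\ref{eq:qrs_ec}) of the quantum Reed--Solomon code by using Lemma \ref{lem:key} to turn the measured marker weights $w_b$ into a list $\mathbf{i}'$ of erased blocks together with a realigned state $\rho' = \mathcal{E}_{\mathbf{i}'}\circ\mathrm{Enc}_R(\sigma)$, and then apply $\mathrm{Dec}_R$. If anything, you are more explicit than the paper about the delicate position bookkeeping (non-destructiveness of the measurement windows, the telescoping of the increments $w_b - w_{b-1}$, and the correctness of the offset $\beta_b$), which the paper's proof asserts with little detail.
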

\begin{proof}
By the definition of $\mathrm{Enc}$ and $\mathrm{Dec}$,
\begin{align}
&\mathrm{Dec} 
\circ \mathcal{D}_{\bm{i}}
\circ \mathrm{Enc} (\sigma)\\
=&
\mathrm{Dec}_{R} 
\circ \mathrm{Loc} 
\circ \mathcal{D}_{\bm{i}}
\circ \mathrm{Asm}
\circ \mathrm{Enc}_{R} (\sigma).
\end{align}
Since $\mathrm{Enc}_{R}$ is the encoder of the quantum Reed-Solomon code,
$\mathrm{Enc}_{R} (\sigma)$ is a codeword.
Let $r_1, r_2, \dots, r_{NE}$ be quantum systems of level 2
such that their quantum state is $\mathrm{Enc}_{R} (\sigma)$.
Then the quantum system after $\mathrm{Asm}$
consists of $N$ blocks,
and the $b$th block is
\begin{align}
r_{1 + (b-1)E}, r_{2 + (b-1)E}, \dots, r_{E + (b-1)E}, 
\mathfrak{o}^t, \mathfrak{l}^t,
\end{align}
where $1 \le b \le N$.

By Lemma \ref{lem:key}, for $b > 1$,
the deletion error occured after
$\mathfrak{o}^{t}$ at $(b-1)$th block
and before $\mathfrak{l}^{t}$ at $b$th block
if and only if 
$w_{b-1} = w_b$.
Similarly, 
By Lemma \ref{lem:key}, for $b = 1$,
the deletion error occured among
$r_{1 }, r_{2 }, \dots, r_{E }, \mathfrak{o}^t$
if and only if 
$w_1 = w_0 = 0$.
Hence the block positions of quantum deletions are determined
at the Error Detection step of the block error locator.
Therefore, 
if we denote the output of $\mathrm{Loc}$ as $(\mathbf{i}', \rho')$,
we have
\begin{align}
\rho' = \mathcal{E}_{\mathbf{i}'} \circ \mathrm{Enc} (\sigma).
\end{align}
In particular, 
the quantum state of the error position block is 
$\ket{\mathbf{0}^{(E)} } \bra{\mathbf{0}^{(E)} } $.

From the above, we have
\begin{align}
&
\mathrm{Dec}_{R} 
\circ \mathrm{Loc} 
\circ \mathcal{D}_{\bm{i}}
\circ \mathrm{Asm}
\circ \mathrm{Enc}_{R} (\sigma)\\
=&
\mathrm{Dec}_{R} (\mathbf{i}', \rho')\\
=&
\mathrm{Dec}_{R} (\mathbf{i}', \mathcal{E}_{\mathbf{i}'} \circ \mathrm{Enc} (\sigma)).
\end{align}
By Equantion (\ref{eq:qrs_ec}),
it is equal to $\sigma$.
\end{proof}

Since alternating sandwich mapping is a simple method, 
the author thought it would not be surprising 
if there were previous studies in classical coding theory.
The author reviewed numerous papers on deletion codes, 
including several survey papers (e.g., \cite{mitzenmacher2008survey,mercier2010survey}), 
but did not find any similar ideas.
The author also consulted researchers about deletion codes, 
but no one was familiar with this idea.
A related idea is that of a marker code,
a deletion error-correcting code constructed by
inserting specific bit sequences \cite{ratzer2005marker}.
The inserted bit sequences are different from 
those inserted by the alternating sandwich mapping.
Therefore, alternating sandwich mapping 
may represent a new idea in coding theory,
at least quantum coding theory.
In considering quantum insertion-deletion channels, 
Leahy mentioned the insertion of classical bit strings
\cite{leahy2019quantum}. 
The specific bit string used was $100\dots0$,
where the string consists of consecutive zeros
followed by a single one. 
However, this method fails to correct errors 
if the leading $1$ is deleted. 
Furthermore, 
Leahy did not employ 
non-binary quantum erasure-correcting codes
from Reed-Solomon codes.


Finally, we discuss the achievable code rate of the constructed codes.
The utilization of quantum Reed-Solomon codes 
enables our codes to achieve a flexible code rate.

\begin{Theorem}\label{thm:highRateExistence}
For any real number $0 \le \gamma \le 1$,
there exist quantum deletion error-correcting codes, 
denoted as $\mathcal{Q}_1, \mathcal{Q}_2, \dots$,
capable of correcting $t$ or less deletion errors,
such that their code rates converge to $\gamma$,
i.e., $\lim_{n \rightarrow \infty} \gamma_n = \gamma$,
where $\gamma_n$ represents the code rate of $\mathcal{Q}_n$.
\end{Theorem}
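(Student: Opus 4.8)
The plan is to compute the rate of $\mathcal{Q}$ in closed form and then exhibit a sequence of admissible parameter choices driving it to $\gamma$; the content is mostly bookkeeping, verifying that the Reed-Solomon constraints can be met while the rate sweeps all of $[0,1]$ in the limit. First I would pin down the rate. Because $\mathrm{Asm}$ is an injective linear map that merely tensors the fixed ancilla $\ket{\mathbf{0}^{(t)}} \otimes \ket{\mathbf{1}^{(t)}}$ onto each of the $N$ blocks, its image $\mathcal{Q} = \mathrm{Asm}(\mathcal{R})$ has the same logical dimension $2^{E(K_C + K_D - N)}$ as $\mathcal{R}$ from Eq.(\ref{eq:qrs_dim}), while the code length grows from $NE$ to $n = N(E + 2t)$ since each block gains $2t$ level-$2$ systems. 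Hence by the Code Rate definition,
\[
\gamma_{\mathcal{Q}} = \frac{E(K_C + K_D - N)}{N(E + 2t)} = \frac{K_C + K_D - N}{N} \cdot \frac{E}{E + 2t},
\]
a product of the underlying quantum Reed-Solomon rate of Eq.(\ref{eq:qrs_rate}) and an overhead factor tending to $1$ as $E$ grows.

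Next I would construct the sequence. For each index $m$ I would take $N_m \to \infty$ and $E_m \to \infty$ with $E_m \ge \lceil \log_2(N_m + 1)\rceil$ to secure $N_m \le 2^{E_m} - 1$; set $K_{C,m} := N_m - t$, which satisfies $t \le K_{C,m} \le N_m - t$ once $N_m \ge 2t$ and gives $N_m - K_{C,m} = t$; and choose $K_{D,m} := t + \lceil \gamma N_m \rceil$ when $\gamma < 1$ and $K_{D,m} := N_m$ when $\gamma = 1$. For large $m$ these obey $t \le N_m - K_{C,m} \le K_{D,m} \le N_m$, so the constraints $t \le K_C \le N - t$, $N - K_C \le K_D$, and $N \le 2^E - 1$ all hold, and the preceding theorem then guarantees that each $\mathcal{Q}_m$ corrects $t$ or fewer deletions.

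Finally I would verify convergence. Since $K_{C,m} = N_m - t$, the Reed-Solomon rate equals $(K_{D,m} - t)/N_m$, which is $\lceil \gamma N_m \rceil / N_m \to \gamma$ for $\gamma < 1$ and $1 - t/N_m \to 1$ for $\gamma = 1$; combined with $E_m/(E_m + 2t) \to 1$, the product $\gamma_{\mathcal{Q}_m}$ converges to $\gamma$. I expect the main obstacle to be entirely a matter of care at the boundary: the overhead factor $E/(E + 2t)$ is strictly below $1$ for every finite $E$, so the value $\gamma = 1$ cannot be attained at any finite stage and is reached only by letting $E_m \to \infty$, which is precisely why the statement asserts convergence of the rates rather than exact attainment. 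The integrality of $K_{D,m}$ contributes only a $\lceil \cdot \rceil$ rounding error of order $1/N_m$, which vanishes in the limit.
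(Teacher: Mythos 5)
Your argument is essentially the paper's own: the paper likewise writes the rate as $(\lfloor\gamma N\rfloor - t)E/\bigl(N(E+2t)\bigr)$, sandwiches it between two expressions, and lets $E\to\infty$ with $N=2^{E}-1$. The only structural difference is that the paper pins $K_D:=N-t$ and puts the $\gamma$-dependence into $K_C:=\lfloor\gamma N\rfloor$, whereas you pin $K_C:=N-t$ and vary $K_D$; your closed-form factorization of the rate and the limiting argument are the same bookkeeping.

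There is, however, one genuine slip at the boundary. For the underlying quantum Reed--Solomon code to correct $t$ block erasures you need $d_{\mathcal{R}}\ge t+1$, and by the bound (\ref{eq:minR}) this requires \emph{both} $K_C\le N-t$ and $K_D\le N-t$; the condition you actually verify, $K_{D,m}\le N_m$, is weaker than what is needed. For $0\le\gamma<1$ your choice $K_{D,m}=t+\lceil\gamma N_m\rceil$ does satisfy $K_{D,m}\le N_m-t$ once $N_m$ is large, so nothing breaks there, but you should check that inequality explicitly rather than $K_{D,m}\le N_m$. At $\gamma=1$ your choice $K_{D,m}:=N_m$ gives $d_D=N-K_D+1=1$ (equivalently $D^{\perp}=\{0\}$), so (\ref{eq:minR}) yields only $d_{\mathcal{R}}\ge 1$ and the hypothesis of (\ref{eq:qrs_ec}) with $t'=t$ is not met: the resulting CSS code has no guaranteed quantum erasure-correcting capability. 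The fix is simply $K_{D,m}:=N_m-t$ alongside $K_{C,m}:=N_m-t$, which still drives the rate $\frac{N_m-2t}{N_m}\cdot\frac{E_m}{E_m+2t}$ to $1$. (For what it is worth, the paper's proof has the mirror-image defect at $\gamma=1$, where $K_C=\lfloor\gamma N\rfloor=N$ violates $K_C\le N-t$, and at $\gamma=0$, where $K_C<t$; so this is a shared boundary sloppiness rather than a flaw unique to your route.)
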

\begin{proof}
For each positive integer $E$ satisfying $2^E - 1  > 2t$,
choose Reed-Solomon codes $C$ and $D^{\perp}$ 
with a length $N := 2^{E}-1$,
such that
$K_{D} := N - t$ and $K_{C} := \lfloor \gamma N \rfloor $,
where $\lfloor a \rfloor$ is the largest integer not exceeding $a$.

The dimension of $\mathcal{Q}_E$ is equal to that of $\mathcal{R}$.
Therefore, it is $2^{(K_C + K_D - N) E} = 2^{(\lfloor \gamma N \rfloor - t) E}$.
The code length of $\mathcal{Q}_E$ is $N (E + 2t) $.
Hence, the code rate of $\mathcal{Q}_E$ is
\begin{align}
\gamma_{\mathcal{Q}_E} =
\frac{(\lfloor \gamma N \rfloor - t) E}{N (E+2t)}.
\end{align}
By the definition of $\lfloor \; \rfloor$,
\begin{align}
\frac{( \gamma N - 1 - t)E}{N(E+2t)}
\le 
\gamma_{\mathcal{Q}_E}
\le
\frac{( \gamma N  - t) E}{N(E+2t)}.
\label{eq:qrateConv}
\end{align}

Recall that $N$ is chosen as $2^{E}-1$.
As $E$ approaches infinity,
$N$ also tends to infinity.
Moreover, both sides of (\ref{eq:qrateConv}) converge to $\gamma$.
Hence
$\gamma_{\mathcal{Q}_E}$
also converges to $\gamma$.
\end{proof}

\section{Conclusion}\label{sec:concRem}

This paper proposed a construction method 
for quantum deletion error-correcting codes.
The idea is to reduce the deletion error correction 
to the erasure error correction of quantum Reed-Solomon codes 
by combining the proposed alternating sandwich mapping 
and the proposed block error locator.
This approach allows us 
to leverage the properties of classical Reed-Solomon codes, 
enabling flexible code rate design.
Furthermore, while previous studies implicitly assumed that 
the number of deletion errors is known for error correction, 
our proposed decoder does not require such an assumption.
However, this study did not achieve an arbitrary relative distance. 
This remains a topic for future research.
What are the potential applications of quantum deletion error-correcting codes?
Quantum erasure error-correcting codes have been applied in information security 
as components of quantum secret sharing protocols.
If a protocol based on quantum deletion error-correcting codes is constructed, 
what advantages would it offer?
Research on quantum deletion error-correcting codes is still in its early stages.
The author hopes that in the future, 
new and efficient codes will be proposed, 
applications will be discovered, 
and implementations will lead to the advancement of the theory.


\section*{Acknowledgments}
This paper is partially supported by
KAKENHI 21H03393.
We would like to express our gratitude to J.B. Nation, Ellen Hughes from University of Hawai'i at Manoa, 
and Justin Kong, Austin Anderson from Kapi'olani Community College 
for providing valuable discussion opportunities.

\bibliographystyle{ieicetr}
\bibliography{reference}

\begin{thebibliography}{10}

\bibitem{divincenzo1996fault}
D.P. DiVincenzo and P.W. Shor, ``Fault-tolerant error correction with efficient
  quantum codes,'' Phys. Rev. Lett., vol.77, no.15, p.3260, 1996.

\bibitem{gottesman2010introduction}
D.~Gottesman, ``An introduction to quantum error correction and fault-tolerant
  quantum computation,'' Quantum information science and its contributions to
  Math., Proc. of Symposia in Applied Math., vol.68, pp.13--58, 2010.

\bibitem{ekert1996quantum}
A.~Ekert and C.~Macchiavello, ``Quantum error correction for communication,''
  Phys. Rev. Lett., vol.77, no.12, p.2585, 1996.

\bibitem{calderbank1996good}
A.R. Calderbank and P.W. Shor, ``Good quantum error-correcting codes exist,''
  Phys.\ Rev.\ A, vol.54, no.2, pp.1098--1105, 1996.
\newblock doi: \url{10.1103/PhysRevA.54.1098}.

\bibitem{steane1996multiple}
A.~Steane, ``Multiple-particle interference and quantum error correction,''
  Proc. R. Soc. Lond. A, vol.452, no.1954, pp.2551--2577, 1996.
\newblock doi: \url{10.1098/rspa.1996.0136}.

\bibitem{gottesman1997stabilizer}
D.E. Gottesman, Stabilizer Codes and Quantum Error Correction, Ph.D. thesis,
  California Institute of Technology, 1997.

\bibitem{chang2001reed}
H.C. Chang, C.B. Shung, and C.Y. Lee, ``A reed-solomon product-code (rs-pc)
  decoder chip for dvd applications,'' IEEE J. of Solid-State Circuits, vol.36,
  no.2, pp.229--238, 2001.

\bibitem{howard2006error}
S.L. Howard, C.~Schlegel, K.~Iniewski, and K.~Iniewski, ``Error control coding
  in low-power wireless sensor networks: When is ecc energy-efficient?,''
  EURASIP J. on Wireless Commun. and Networking, vol.2006, pp.1--14, 2006.

\bibitem{surekha2015payment}
A.~Surekha, P.R. Anand, and I.~Indu, ``E-payment transactions using encrypted
  qr codes,'' International J. of Applied Engineering Research, vol.10, no.77,
  pp.460--463, 2015.

\bibitem{shor2000simple}
P.W. Shor and J.~Preskill, ``Simple proof of security of the {BB}84 quantum key
  distribution protocol,'' Phys. Rev. Lett., vol.85, no.2, pp.441--444, 2000.
\newblock doi: \url{10.1103/PhysRevLett.85.441}.

\bibitem{kulkarni2013nonasymptotic}
A.A. Kulkarni and N.~Kiyavash, ``Nonasymptotic upper bounds for deletion
  correcting codes,'' IEEE Trans. on Information Theory, vol.59, no.8,
  pp.5115--5130, 2013.

\bibitem{wachter2017limits}
A.~Wachter-Zeh, ``Limits to list decoding of insertions and deletions,'' Proc.
  2017 IEEE International Symposium on Information Theory (ISIT),
  pp.1948--1952, IEEE, 2017.

\bibitem{schoeny2017codes}
C.~Schoeny, A.~Wachter-Zeh, R.~Gabrys, and E.~Yaakobi, ``Codes correcting a
  burst of deletions or insertions,'' IEEE Trans. on Information Theory,
  vol.63, no.4, pp.1971--1985, 2017.

\bibitem{buschmann2013levenshtein}
T.~Buschmann and L.V. Bystrykh, ``Levenshtein error-correcting barcodes for
  multiplexed dna sequencing,'' BMC bioinformatics, vol.14, no.1, p.272, 2013.

\bibitem{chee2018coding}
Y.M. Chee, H.M. Kiah, A.~Vardy, E.~Yaakobi, {\em et~al.}, ``Coding for
  racetrack memories,'' IEEE Trans. on Information Theory, vol.64, no.11,
  pp.7094--7112, 2018.

\bibitem{mappouras2019greenflag}
G.~Mappouras, A.~Vahid, R.~Calderbank, and D.J. Sorin, ``Greenflag: Protecting
  3d-racetrack memory from shift errors,'' 2019 49th Annual IEEE/IFIP
  International Conference on Dependable Systems and Networks (DSN), pp.1--12,
  IEEE, 2019.

\bibitem{bergmann2016quantum}
M.~Bergmann and P.~van Loock, ``Quantum error correction against photon loss
  using multicomponent cat states,'' Phys. Rev. A, vol.94, no.4, p.042332,
  2016.

\bibitem{bergmann2016quantumNOON}
M.~Bergmann and P.~van Loock, ``Quantum error correction against photon loss
  using noon states,'' Phys. Rev. A, vol.94, no.1, p.012311, 2016.

\bibitem{nakayama2020first}
A.~Nakayama and M.~Hagiwara, ``The first quantum error-correcting code for
  single deletion errors,'' IEICE Commun. Express, p.2019XBL0154, 2020.

\bibitem{hagiwara2020four}
M.~Hagiwara and A.~Nakayama, ``A four-qubits code that is a quantum deletion
  error-correcting code with the optimal length,'' Proc. 2020 IEEE
  International Symposium on Information Theory (ISIT), pp.1870--1874, IEEE,
  2020.

\bibitem{nakayama2020single}
A.~Nakayama and M.~Hagiwara, ``Single quantum deletion error-correcting
  codes,'' Proc. 2020 IEICE International Symposium on Information Theory and
  Its Applications (ISITA), pp.329--333, IEICE, 2020.

\bibitem{shibayama2020new}
T.~Shibayama, ``New instances of quantum error-correcting codes for single
  deletion errors,'' Proc. 2020 IEICE International Symposium on Information
  Theory and Its Applications (ISITA), pp.334--338, IEICE, 2020.

\bibitem{ouyang2021permutation}
Y.~Ouyang, ``Permutation-invariant quantum coding for quantum deletion
  channels,'' 2021 IEEE International Symposium on Information Theory (ISIT),
  pp.1499--1503, IEEE, 2021.

\bibitem{shibayama2021permutation}
T.~Shibayama and M.~Hagiwara, ``Permutation-invariant quantum codes for
  deletion errors,'' 2021 IEEE International Symposium on Information Theory
  (ISIT), pp.1493--1498, IEEE, 2021.

\bibitem{matsumoto2022constructions}
R.~Matsumoto and M.~Hagiwara, ``Constructions of l-adic t-deletion-correcting
  quantum codes,'' IEICE Trans. on Fundamentals, vol.105, no.3, pp.571--575,
  2022.

\bibitem{shibayama2022equivalence}
T.~Shibayama and M.~Hagiwara, ``Equivalence of quantum single insertion and
  single deletion error-correctabilities, and construction of codes and
  decoders,'' 2022 IEEE International Symposium on Information Theory (ISIT),
  pp.2970--2975, IEEE, 2022.

\bibitem{mitzenmacher2008survey}
M.~Mitzenmacher, ``A survey of results for deletion channels and related
  synchronization channels,'' Algorithm Theory--SWAT 2008: 11th Scandinavian
  Workshop on Algorithm Theory, Gothenburg, Sweden, July 2-4, 2008. Proc. 11,
  pp.1--3, Springer, 2008.

\bibitem{mercier2010survey}
H.~Mercier, V.K. Bhargava, and V.~Tarokh, ``A survey of error-correcting codes
  for channels with symbol synchronization errors,'' IEEE Commun. Surveys \&
  Tutorials, vol.12, no.1, pp.87--96, 2010.

\bibitem{ratzer2005marker}
E.A. Ratzer, ``Marker codes for channels with insertions and deletions,''
  Annales des t{\'e}l{\'e}communications, vol.60, no.1-2, pp.29--44, 2005.

\bibitem{leahy2019quantum}
J.~Leahy, D.~Touchette, and P.~Yao, ``Quantum insertion-deletion channels,''
  ArXiv, vol.abs/1901.00984, 2019.

\end{thebibliography}

\end{document}